\title{On the Nearest Neighbor Rule for the\\ Metric Traveling Salesman Problem}
\author{Stefan Hougardy and Mirko Wilde}
\institute{Research Institute for Discrete Mathematics,
           University of Bonn\\
           Lenn\'estr.~2, 53113 Bonn, Germany\\[5mm]
           \today}
\begin{document}
\maketitle 

\begin{abstract}
\small
We present a very simple family of traveling salesman instances with $n$ cities 
where the nearest neighbor rule may produce a tour that is $\Theta(\log n)$ times longer
than an optimum solution. Our family works for the graphic, the
euclidean, and the rectilinear traveling salesman problem at the same time. 
It improves the so far best known lower bound in the euclidean case and 
proves for the first time a lower bound in the rectilinear case. 
\end{abstract}

{\small\textbf{keywords:} traveling salesman problem; nearest neighbor rule; approximation algorithm}

\section{Introduction}

Given $n$ cities with their pairwise distances $d_{i,j}$ the \emph{traveling salesman problem} (TSP) 
asks for a shortest tour that visits each city exactly once.
This problem is known to be NP-hard~\cite{Kar1972} and therefore much effort has been spent
to design efficient heuristics that are able to find good tours.
A heuristic $A$ for the traveling salesman problem is said to have \emph{approximation ratio}
$c$ if for every TSP instance it finds a tour that is at most 
$c$ times longer than a shortest tour. 
We will consider here only \emph{metric} TSP instances, i.e., TSP instances where
the distances between the $n$ cities satisfy the triangle inequality $d_{i,j}\le d_{i,k} + d_{k,j}$
for all $1\le i,j,k\le n$. Well studied special cases of the metric TSP are the
\emph{euclidean} and the \emph{rectilinear} TSP. In these instances the cities are points in the plane
and the distance between two cities is defined as the euclidean respectively
rectilinear distance. A third example of metric TSP instances are the \emph{graphic} 
TSP instances. Such an instance is obtained from an (unweighted, undirected) 
connected graph $G$ which has as vertices
all the cities. The distance between two cities is then defined as the length of a shortest path
in $G$ that connects the two cities.

One of the most natural heuristics for the TSP is the \emph{nearest neighbor rule} (NNR)~\cite{Flo1956}. 
This rule grows \emph{partial tours} of increasing size where a partial tour is an ordered subset
of the cities. The nearest neighbor rule starts with a partial tour consisting of a single city $x_1$.
If the nearest neighbor rule has constructed a partial tour $(x_1, x_2, \ldots, x_k)$ then it extends
this partial tour by a city $x_{k+1}$ that has smallest distance to $x_k$ and is not yet contained
in the partial tour. Ties are broken arbitrarily. 
A partial tour that contains all cities yields a TSP tour by going from the last 
city to the first city in the partial tour. Any tour that can be obtained this way is called an 
NNR tour.

\subsection{Known Results}

Rosenkrantz, Stearns, and Lewis~\cite{RSL1977} proved that on an $n$ city metric TSP instance
the nearest neighbor rule has approximation ratio at most
$\frac12\lceil\log n\rceil +\frac12$, where throughout the paper $\log$ denotes the logarithm with base 2.
They also constructed a family of metric TSP instances that show
a lower bound of $\frac13\log n$ for the approximation ratio of the 
nearest neighbor rule.
Johnson and Papadimitriou~\cite{JP1985} presented a simplified construction that yields a lower
bound of $\frac16\log n$. 
Hurkens and Woeginger~\cite{HW2004} constructed a simple family of graphic TSP instances
that proves a lower bound of $\frac14\log n$. Moreover they present in the same paper
a simple construction of a family of euclidean TSP instances that proves 
that the nearest neighbor rule has approximation
ratio at least $\left(\sqrt 3 -\frac32\right) \left(\log n -2\right)$ where 
$\sqrt 3 -\frac32 \le 0.232$.  
For the graphic TSP Pritchard~\cite{Pri2007} presents a more complicated construction
that shows that for each $\epsilon > 0$ 
the approximation ratio of the nearest neighbor rule on graphic TSP instances
is at least $\frac1{2+\epsilon} \log n$.

\subsection{Our Contribution}

We will present in the next section a very simple construction that proves a lower bound of
$\frac14\log n$ for the graphic, euclidean, and rectilinear TSP at the same time. 
Our construction proves for the first time a lower bound on the approximation ratio
of the nearest neighbor rule for rectilinear TSP instances. Moreover, we improve
the so far best known lower bound of~\cite{HW2004} for the approximation ratio of the nearest 
neighbor rule for euclidean TSP instances.

\section{The Construction of Bad Instances}\label{sec:construction}

We will now describe our construction of a family of metric TSP instances $G_k$ on which 
the nearest neighbor rule yields tours that are much longer than optimum tours.
As the set of cities $V_k$ we take the points of a $2\times (8\cdot 2^k-3)$ subgrid of $\mathbb{Z}^2$.
Thus we have $|V_k|= 16 \cdot 2^k-6$.
Let $G_k$ be any TSP-instance defined on the cities $V_k$ that satisfies the following conditions:
\begin{itemize}
\item[(i)] if two cities have the same x-coordinate or the same y-coordinate
           their distance is the euclidean distance between the two cities.
\item[(ii)] if two cities have different x-coordinate and  different y-coordinate
           then their distance is at least as large as the absolute difference between
           their x-coordinates.       
\end{itemize}

Note that if we choose as $G_k$ the euclidean or the rectilinear TSP instance on $V_k$ 
then conditions (i) and (ii) are satisfied.
We can define a graph on $V_k$ by adding an edge between each pair of cities
at distance 1. The graphic TSP that is induced by this graph is exactly the rectilinear TSP.
The graph for $G_0$ is shown in Figure~\ref{fig:G0}. We label the lower left vertex
in $V_k$ as $l_k$ and the top middle vertex in $V_k$ as $m_k$. 

\begin{figure}[ht]
\centering
\begin{tikzpicture}[scale=1]
\draw[step = 1cm] (0,0) grid (4,1); 
\foreach \x in {0,1,2,3,4}
	\foreach \y in {0,1}
		\fill (\x, \y) circle(1mm);
\draw[blue,->] (0.1,0.1) -- (0.1, 0.85);
\draw[blue,->] (0.1,0.9) -- (0.9,0.9);
\draw[blue,->] (0.9,0.85) -- (0.9,0.1);
\draw[blue,->] (1.1,0.1) -- (1.9,0.1);
\draw[blue,->] (2.1,0.1) -- (2.9,0.1);
\draw[blue,->] (3.1,0.1) -- (3.9,0.1);
\draw[blue,->] (3.9,0.1) -- (3.9, 0.85);
\draw[blue,->] (3.9,0.9) -- (3.1,0.9);
\draw[blue,->] (2.9,0.9) -- (2.1,0.9);

\draw (0,0) node[anchor = east] {$l_0$};
\draw (2,1) node[anchor = south] {$m_0$};
\end{tikzpicture}
\caption{The graph defining the graphic TSP $G_0$ together with a partial NNR tour that 
connects $l_0$ with $m_0$.}
\label{fig:G0}
\end{figure}

Our construction of an NNR tour in $G_k$ will only make use of the properties (i) and (ii) of $G_k$. 
Thus with the same proof we get a result for euclidean, rectilinear and graphic TSP instances.
We will prove by induction on $k$ that the nearest neighbor rule can find a rather long tour in $G_k$. 
For this we need to prove the following slightly more general result which is 
similar to Lemma~1 in~\cite{HW2004}.

\begin{lemma}
\label{lemma}
Let the cities of $G_k$ be embedded into $G_m$ with $m > k$. Then there exists a partial NNR tour
in $G_m$ that 
\begin{itemize}
\item[\rm(a)] visits exactly the cities in $G_k$,
\item[\rm(b)] starts in $l_k$ and ends in $m_k$, and 
\item[\rm(c)] has length exactly $(12+4k)\cdot 2^k-3$.
\end{itemize}
\end{lemma}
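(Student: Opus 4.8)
The plan is to argue by induction on $k$, and the stated generality---an \emph{arbitrary} embedding of $G_k$ into a larger $G_m$---is exactly what makes the induction go through, since it lets us reapply the hypothesis to a sub-copy of $G_k$ sitting inside the next instance. I would first dispose of the base case $k=0$ by exhibiting the explicit partial tour drawn in Figure~\ref{fig:G0}: starting at $l_0$ it makes a small ``cap'' over the two leftmost columns, sweeps along the bottom row to the right end, climbs one step, and returns along the top row to $m_0$. One checks directly that every move is to a nearest unvisited city (all moves have length $1$, competing cities are at distance $\ge 1$, and ties are broken in the indicated direction), that the tour visits all $10$ cities of $G_0$, runs from $l_0$ to $m_0$, and has length $9=(12+4\cdot 0)\cdot 2^0-3$. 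Already here conditions (i) and (ii) are what guarantee that no city outside the copy of $G_0$ but inside the ambient $G_m$ is ever strictly closer than the intended neighbor.

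For the inductive step I would regard the embedded copy of $G_{k+1}$ as the copy of $G_k$ occupying its leftmost $8\cdot 2^k-3$ columns, followed by the remaining $8\cdot 2^k$ columns, the central of which carries $m_{k+1}$ at the top. Applying the induction hypothesis to the left sub-copy (a legitimate embedding of $G_k$ into $G_m$ since $m>k$) yields a partial NNR tour $T$ covering exactly those columns, running from $l_{k+1}=l_k$ to $m_k$, of length $L_k:=(12+4k)\cdot 2^k-3$. I would then \emph{extend} $T$ over the new columns. The first extension step is the crucial one: standing at $m_k=\bigl((8\cdot 2^k-4)/2,\,1\bigr)$ with every nearer city already used, conditions (i) and (ii) force the next move to be a straight horizontal ``jump'' of length $(8\cdot 2^k-2)/2$ onto the first unvisited column---this is precisely where (ii) is needed, to rule out any diagonally placed city being closer. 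From there the tour works through the new columns, largely by unit steps but with further forced horizontal jumps that mirror the construction one recursion level down, and I would break the many distance-$1$ ties so as to postpone the global top-centre vertex $m_{k+1}$, finishing with one last forced jump landing exactly on $m_{k+1}$. A bookkeeping check then confirms the total length is $L_k+(20+4k)\cdot 2^k=(12+4(k+1))\cdot 2^{k+1}-3$, equivalently the clean recursion $L_{k+1}=2L_k+8\cdot 2^k+3$, whose factor $2$ already signals that the extension reproduces a scaled copy of the whole structure.

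I expect two places to carry essentially all the difficulty. The first is verifying that every single step---and in particular each horizontal jump---is a genuine nearest-neighbour move when measured against \emph{all} unvisited cities of the ambient $G_m$, not merely those of $G_{k+1}$; this is where (i) and (ii) must be invoked with care, and where one must confirm that cities of $G_m$ lying just outside the embedded block are never strictly closer (at the block boundary they turn out to be only tied, which the adversarial tie-breaking absorbs). The second is the combinatorial bookkeeping of the extension: the distance-$1$ ties must be resolved consistently so that no city is stranded, so that the penultimate vertex lands at exactly the right distance from $m_{k+1}$ to force the closing jump, and so that the accumulated length is \emph{exactly} $L_{k+1}$ rather than merely of the right order. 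Pinning down the recursive shape of the extension precisely enough that the self-similarity is preserved from one level to the next---so that the induction truly closes---is the main obstacle I would watch for.
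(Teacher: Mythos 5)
Your overall plan (induction on $k$, the base case from Figure~\ref{fig:G0}, the long horizontal jump of length $4\cdot 2^k-1$ out of $m_k$, and the length recursion $L_{k+1}=2L_k+8\cdot 2^k+3$) matches the paper's proof, but your inductive step has a genuine gap: you apply the induction hypothesis only to the \emph{left} copy of $G_k$ and then treat the remaining $8\cdot 2^k$ columns as something to be covered by an ad-hoc extension (``largely by unit steps but with further forced horizontal jumps that mirror the construction one recursion level down''). Those remaining columns are not a minor appendage: apart from a $2\times 3$ grid they consist of a \emph{second full copy} of $G_k$, whose traversal contains $2(2^k-1)$ long jumps of its own, i.e., all of the recursive complexity of the problem. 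The only way to specify that extension and to verify that each of its steps is a nearest-neighbor move is to invoke the induction hypothesis a second time, on the right copy $G_k''$ (which is itself a legitimate embedding of $G_k$ into $G_m$, traversed from its local $l_k''$ to its local $m_k''$). This double use of the hypothesis is precisely the paper's proof: $G_{k+1}$ is decomposed into two copies $G_k'$ and $G_k''$ separated by a $2\times 3$ grid; the tour runs $l_k'\to m_k'$ (induction hypothesis), jumps to the top-left grid vertex, goes down and along the bottom to $l_k''$, runs $l_k''\to m_k''$ (induction hypothesis again), jumps back to the top-right grid vertex, and takes one unit step left to $m_{k+1}$. You even observed that the factor $2$ in your recursion ``signals that the extension reproduces a scaled copy of the whole structure,'' but noticing the self-similarity is not the same as using it: as written, the existence and NNR-validity of your extension is exactly as hard as the lemma you are trying to prove, and your own closing sentence concedes this is left open.

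Two smaller inaccuracies point to the same missing decomposition. First, $m_{k+1}$ does not sit in the central column of the remaining $8\cdot 2^k$ columns; it is the top of the \emph{middle column of the $2\times 3$ separating grid}, i.e., the second column of your right block (which is the center of $G_{k+1}$ as a whole). Second, the tour does not finish ``with one last forced jump landing exactly on $m_{k+1}$'': the last long jump (from $m_k''$, again of length $4\cdot 2^k-1$) lands on the rightmost top vertex of the $2\times 3$ grid, and $m_{k+1}$ is then reached by a single unit step to the left. Both slips would disappear once you make the two-copies-plus-grid structure explicit and let the induction hypothesis do the work on each copy.
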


\begin{proof}
We use induction on $k$ to prove the statement. For $k=0$ a
partial NNR tour of length $12\cdot 2^0-3=9$ that satisfies (a), (b), and (c)
is shown in Figure~\ref{fig:G0}.
Now assume we already have defined a partial NNR tour for $G_k$. 
Then we define a partial NNR tour for $G_{k+1}$ recursively as follows.
As $|V_{k+1}| = 16 \cdot 2^{k+1}-6 = 2\cdot(16 \cdot 2^k-6) + 6 = 2\cdot |V_k| + 6$
we can think of $G_{k+1}$ to be the disjoint union of two copies $G_k'$ and $G_k''$ of $G_k$
separated by a $2\times 3$ grid. This is shown in Figure~\ref{fig:Gk}.

\begin{figure}[ht]
\centering
\begin{tikzpicture}
\draw (0,0) -- (0,1) -- (4,1) -- (4,0) -- (0,0) ;
\draw[step = 1cm] (5,0) grid (7,1); 
\draw (8,0) -- (8,1) -- (12,1) -- (12,0) -- (8,0) ;
\fill (0,0) circle(0.1); \draw (0,0) node[anchor = north] {$l_{k+1}=l_k'$};
\fill (2,1) circle(0.1); \draw (2,1) node[anchor = south] {$m_k'$};
\foreach \x in {5,6,7} \fill (\x,0) circle(0.1);
\foreach \x in {5,6,7} \fill (\x,1) circle(0.1);
\fill (8,0) circle(0.1); \draw (8,0) node[anchor = north] {$l_k''$};
\fill (10,1) circle(0.1); \draw (10,1) node[anchor = south] {$m_k''$};
\draw (6,1) node[anchor = south] {$m_{k+1}$};
\draw[dash pattern = on 2pt off 2pt,blue,->] (0.1,0.05) -- (1.9,0.95);
\draw[blue,->] (2.1,0.95) .. controls (4.5,0.75) .. (4.9,0.95);
\draw[blue,->] (5.1,0.9) -- (5.1, 0.15);
\foreach \x in {5,6,7} \draw[blue,->] (\x.1,0.1) -- (\x.9, 0.1);
\draw[dash pattern = on 2pt off 2pt,blue,->] (8.1,0.05) -- (9.9,0.95);
\draw[blue,->] (9.9,1.05) .. controls (7.5,1.25) .. (7.1,1.05);
\draw[blue,->] (6.9,0.9) -- (6.1, 0.9);
\draw (2,0.4) node {$G'_k$};
\draw (10,0.4) node {$G''_k$};
\end{tikzpicture}
\caption{The recursive construction of a partial NNR tour for the instance $G_{k+1}$. The dashed lines 
indicate partial NNR tours in $G_k'$ and $G_k''$. }
\label{fig:Gk} 
\end{figure}
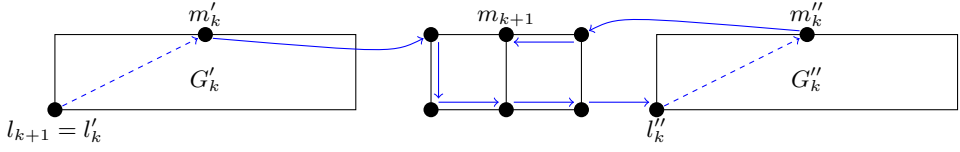

Now we can construct a partial NNR tour for $G_{k+1}$ as follows. Start in vertex $l_k'$ and follow
the partial NNR tour in $G'_k$ that ends in vertex $m_k'$. The leftmost top vertex of the $2\times 3$-grid
is now a closest neighbor of  $m_k'$ that has not been visited so far. Go to this vertex, 
visit some of the vertices of the $2\times 3$-grid as indicated in Figure~\ref{fig:Gk}, and
then go to vertex $l_k''$. From this vertex follow the partial NNR tour in $G_k''$ which ends in 
vertex $m_k''$. A nearest neighbor for this vertex now is the rightmost top vertex of
the $2\times3$-grid. Continue with this vertex and go to the left to reach vertex $m_{k+1}$. 
The partial NNR tour constructed this way obviously satisfies conditions (a) and (b). Moreover,
this partial NNR tour is also a partial NNR tour when $G_{k+1}$ is embedded into some $G_l$ for $l>k+1$.

The length of this partial NNR tour is twice the length of the partial NNR tour in $G_k$
plus five edges of length 1 plus two edges of length $\frac12\left(\frac12\cdot |V_k|+1\right)$.
Thus we get a total length of 
$$2 \left((12+4k)\cdot 2^k-3\right) + 5 + 8\cdot 2^k -2 ~=~ (12+4(k+1))\cdot 2^{k+1}-3$$
for the partial NNR tour constructed in $G_{k+1}$. This proves condition (c).
\end{proof}

\begin{theorem}
\label{thm:main}
On graphic, euclidean, and rectilinear TSP instances with $n$ cities
the approximation ratio of the nearest neighbor rule is no better than $\frac14\cdot \log n - 1$.
\end{theorem}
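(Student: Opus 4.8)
The plan is to convert the partial NNR tour of Lemma~\ref{lemma} into a full NNR tour on the instance $G_k$ itself, to bound an optimum tour from above by an easy snake tour through the $2\times(8\cdot 2^k-3)$ grid, and then to rewrite the resulting ratio in terms of $n=|V_k|=16\cdot 2^k-6$ so that the clean bound $\frac14\log n-1$ falls out.

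First I would take the instance to be $G_k$ and run the nearest neighbor rule starting in $l_k$. The construction underlying Lemma~\ref{lemma} produces a partial NNR tour that visits every city, starts in $l_k$, ends in $m_k$, and has length exactly $(12+4k)\cdot 2^k-3$. Since at that point all cities are visited, the rule closes the tour by the edge from $m_k$ back to $l_k$, so the resulting full NNR tour has length at least $(12+4k)\cdot 2^k-3$.

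Next I would bound the optimum. Any two horizontally or vertically adjacent grid points lie at distance $1$ in each of the three metrics, so $V_k$ admits a Hamiltonian cycle built entirely from unit edges: run along the bottom row, step up to the top row, run back, and close with one vertical edge. A Hamiltonian cycle on $n$ vertices has exactly $n$ edges, hence this tour, and a fortiori an optimum tour, has length at most $n=16\cdot 2^k-6$.

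Combining the two estimates, the approximation ratio on $G_k$ is at least
$$\frac{(12+4k)\cdot 2^k-3}{16\cdot 2^k-6}.$$
Because $n=16\cdot 2^k-6<2^{k+4}$ gives $\log n\le k+4$, it suffices to show this quotient is at least $\frac k4\ge\frac14\log n-1$; clearing the positive denominator collapses the first inequality to $48\cdot 2^k+6k-12\ge0$, which holds for all $k\ge0$. I expect the one genuinely delicate point to be the very first step: one must verify that the choices prescribed by Lemma~\ref{lemma} remain legitimate nearest-neighbor choices on the standalone instance $G_k$ (the lemma is stated for $G_k$ embedded in a strictly larger $G_m$), and that the closing edge is forced, so that what we build is truly an NNR tour and not merely a short Hamiltonian path. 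This follows because deleting the surrounding cities only removes candidate neighbors and can never create a strictly closer unvisited city, but it deserves an explicit remark.
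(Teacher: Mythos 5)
Your proposal is correct and follows essentially the same route as the paper: lower-bound the NNR tour by the partial tour of Lemma~\ref{lemma}, upper-bound the optimum by the unit-edge Hamiltonian cycle of length $n$, and massage the ratio $\frac{(12+4k)\cdot 2^k-3}{16\cdot 2^k-6}$ into $\frac14\log n-1$ (your algebra routes through $\frac{k}{4}$ while the paper uses $\frac{3+k}{4}$, but this is cosmetic). Your explicit remarks on restricting the lemma's embedded construction to the standalone $G_k$ and on the closing edge are details the paper leaves implicit, not a different approach.
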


\begin{proof}
The instance $G_k$ defined above has $n:=16 \cdot 2^k-6$ cities and an optimum TSP tour 
in $G_k$ has length $n$. As shown in Lemma~\ref{lemma} there exists a partial NNR tour in $G_k$ 
of length at least $(12+4k)\cdot 2^k-3$. Thus the approximation ratio of the nearest neighbor rule
is no better than 
$$\frac {(12+4k)\cdot 2^k-3}{16 \cdot 2^k-6} ~\ge~ \frac{12+4k}{16}~=~\frac{3+k}{4}~=~\frac{3+\log\left(\frac{n+6}{16}\right)}{4}~\ge~\frac14\left(\log n - 1\right).$$
\end{proof}

\section{Comments}

Conditions (i) and (ii) in Section~\ref{sec:construction} are satisfied whenever
the distances in $G_k$ are defined by an $L^p$-norm. Thus Theorem~\ref{thm:main} not only holds
for the $L^2$- and the $L^1$-norm but for all $L^p$-norms.      
As already noted in~\cite{JP1985} and~\cite{HW2004}
one can make the NNR tour unique in the euclidean and rectilinear case (and more generally in the $L^p$-case)
by moving all cities by some small amount.
The $\Theta(\log n)$ lower bound of our family is independent of the city in which the nearest neighbor rule starts.

\bibliographystyle{plain}
\bibliography{NNR}

\end{document}